\DeclareMathAlphabet{\mathpzc}{T1}{pzc}{m}{it}
\newtheorem{theorem}{Theorem}
\newtheorem{definition}{Definition}
\newtheorem{remark}{Remark}
\newtheorem{assumption}{Assumption}
\newtheorem{lemma}{Lemma}
\providecommand{\abs}[1]{\lvert#1\rvert}
\title{\LARGE \bf
A Communication Security Game on Switched Systems \\
for Autonomous Vehicle Platoons
}
\author{Guoxin Sun, Tansu Alpcan, Benjamin I. P. Rubinstein and Seyit Camtepe
\thanks{G. Sun, T. Alpcan and B. Rubinstein are with the
        University of Melbourne, Australia
        {\tt\small \{guoxins@student, tansu.alpcan@, brubinstein@\}.unimelb.edu.au}}%
\thanks{S. Camtepe is with the Distributed Systems Security Group, CSIRO Data61, Australia
        {\tt\small Seyit.Camtepe@data61.csiro.au}}%
}
\begin{document}

\maketitle
\thispagestyle{empty}
\pagestyle{empty}

\begin{abstract}

Vehicle-to-vehicle communication enables autonomous platoons to boost traffic efficiency and safety, while ensuring string stability with a constant spacing policy. However, communication-based controllers are susceptible to a range of cyber-attacks. In this paper, we propose a distributed attack mitigation defense framework with a dual-mode control system reconfiguration scheme to prevent a compromised platoon member from causing collisions via message falsification attacks. In particular, we model it as a switched system consisting of a communication-based cooperative controller and a sensor-based local controller and derive conditions to achieve global uniform exponential stability (GUES) as well as string stability in the sense of platoon operation. The switching decision comes from game-theoretic analysis of the attacker and the defender's interactions. In this framework, the attacker acts as a leader that chooses whether to engage in malicious activities and the defender decides which control system to deploy with the help of an anomaly detector. Imperfect detection reports associate the game with imperfect information. A dedicated state constraint further enhances safety against bounded but aggressive message modifications in which a bounded solution may still violate practical constraint e.g. vehicles nearly crashing. Our formulation uniquely combines switched systems with security games to strategically improve the safety of such autonomous vehicle systems.
\end{abstract}

\section{INTRODUCTION}
Connected and autonomous vehicles have emerged as an extensive and promising research area over the past two decades~\cite{qayyum2020securing}. As a closely related topic, vehicular platooning earns its reputation by providing driving/passenger comfort, improved energy efficiency, pollution reduction as well as increase of traffic throughput. The platooning concept involves a group of vehicles travelling in a tightly coupled manner from an origin to a destination as a single unit. A platoon member receives other vehicles' dynamics and maneuver-related information via a vehicle-to-vehicle (V2V) communication network to compute control commands accordingly and maintain platoon stability, i.e. to maintain a narrow inter-vehicle distance and relative velocity. 

However, such V2V communication implementations also expose novel attack vectors, which increase security vulnerabilities and highlight vehicle platoons as an appealing target for cyber-physical attacks. Adversaries could inject multiple falsified vehicle nodes into the platoon remotely, which allows them to publish carefully crafted beacon messages to gain the privilege of the road or to cause traffic congestion and even serious collisions~\cite{boeira2017effects}.  \textit{There is urgent need to address the safety risks caused by such communication-based attacks}.
This paper focuses on a longitudinal control system for vehicular platooning, characterised by two upper controllers, which provide individual vehicle stability and/or string stability. We propose
a novel dual-mode control system reconfiguration scheme that involves a switching process between the two controllers. To ensure stable switching, we firstly provide conditions on controller gains that is sufficient to guarantee global uniform exponential stability (GUES). Secondly, a minimum dwell time constraint for the string stable controller is then provided, which means this controller is required to be activated at least for this amount of the time to mitigate the defects brought by the other controller in terms of string stability. Thirdly, a security game with imperfect information is then constructed and solved to guide the switching process based on Nash equilibrium solutions, which creates a hybrid combination of game theory and control theory. This game models intelligent and stealthy attacks from rational adversaries, who attempts to bypass detection using the knowledge about the system. It takes the limitations of existing anomaly detection approaches into account while modeling the interactions between attacker and defender. 
Fourthly, a dedicated switching surface is also introduced to capture practical constraints. Lastly, the effectiveness of the proposed approach is shown with some numerical and simulation examples. 
The contributions of this paper include:
 \begin{itemize}
     \item We present a dual-mode control system reconfiguration scheme to mitigate communication-based attacks.
     along with a sufficient condition in terms of controller gains to ensure stability of the switched system.
     \item We provide a lower bound on the dwell time of the string-stable controller to ensure string stability of the switched system after an attack is mitigated.
     \item We develop a unique approach that uses game theory to guide a switched system for communication-based attack mitigation purposes. Our security game formulation captures imperfect detectors as a chance node in our security game structure, and takes detection errors (e.g., false alarms and misses) into account.
     \item The results are illustrated using sophisticated, system-level simulations.
 \end{itemize}
\vspace{-0.1cm}

The rest of the paper is organised as follows. 
Section~\ref{Background} formulates the considered platoon framework. Section~\ref{Attack Model} presents the attack model. The proposed control system reconfiguration scheme is discussed in Section~\ref{DefenseFramework} along with the derived stability conditions. Game theoretic analysis is performed in Section~\ref{Game}. Section~\ref{conclusion} outlines some concluding remarks and future work.

\subsection{Literature Review} \label{sec:litreview}

Sumra \emph{et al.}~\cite{sumra2015attacks} provide a comprehensive survey of the attacks on major security goals, i.e., confidentiality, integrity and availability. Malicious attackers can breach privacy by attempting an eavesdropping attack to steal and misuse confidential information \cite{wiedersheim2010privacy}. The use of vehicular botnets to attempt a denial-of-service (DoS) or distributed denial-of-service (DDoS) attack may cause serious network congestion or disruption \cite{zhang2020distributed, feng2020dynamic}. The attacker may disrupt the platoon operation by hijacking the sensors to conduct the replay attack with the aim to feed the control system with outdated signals generated previously by the system~\cite{merco2018replay}.

Several attempts have been made to detect communication-based attacks. A sliding mode observer is proposed for cyber-attack detection and estimation in the case of event-triggered communication~\cite{keijzer2019sliding}. 
There is also a growing body of literature that recognises the effectiveness of machine learning based intrusion detection system applied on vehicular platooning~\cite{dadras2018identification, yang2019tree, alotibi2020anomaly, sunstrategic}. Even though such studies aim to maximise their detection performance, inevitable false alarm and miss rates are problematic for real-world applications. Moreover, attack mitigation still remains as an open and active research area. 
An intelligent adversaries may also use their knowledge about system vulnerabilities to perform stealthy attacks that maximise their effect and minimise detection rate. 

There has been an increasing interest in cybersecurity analysis from a game-theoretic viewpoint \cite{alpcan2010network}. Extensive research has been carried out to examine problems of allocating limited defense resources (e.g. energy \cite{sedjelmaci2016lightweight}, manpower \cite{fang2016deploying}, communication bandwidth \cite{subba2018game1}) against adversaries in a network.  In addition, several studies~\cite{zohdy2012game,dextreit2013game,marden2015game,junhui2013power} have also applied game theory for the design of control systems. For instance, authors in\cite{dextreit2013game} model the interaction of the driver and the powertrain of an electric vehicle as a non-coorperative game and construct a controller based on a feedback Stackelberg equilibrium. Game theory is also applied to coordinate autonomous vehicles at an uncontrolled intersection to minimize the total delay~\cite{zohdy2012game}. The switched system concept provides a systems-oriented alternative pathway to mitigate attack effects and enhance safety in such adversarial environments. A specific class of switched systems is defined as a family of linear time-invariant systems whose parameters vary within a single finite set according to a switching signal or switching surface. There is large body of literature on observer design for switched systems with unknown inputs \cite{yang2017simultaneous, zammali2020interval, van2014hybrid} and attempts to stabilise the system under such situations \cite{lee2006optimal, yang2016finite, sanchez2019practical}. The present work builds on these existing literature and introduces a novel dual-mode control system reconfiguration scheme defined as a switched system consisting of a communication-based cooperative controller and a sensor-based local controller. This article presents a novel and systematic approach into game-theory-powered switching system as a local online defense strategy.

\section{BACKGROUND}\label{Background}

\subsection{Autonomous Vehicle Platoon Model}
We consider a hierarchical longitudinal control structure~\cite{rajamani2011vehicle}, with an upper level controller and a lower lever controller as shown in Fig.~\ref{fig:HighLevelControlStuct}. The upper level controller uses information on other vehicles acquired via sensors or communication and internal vehicle dynamics to compute the desired acceleration $a_{dir}$ for each vehicle. The lower level controller generates actuator inputs (e.g. throttle and/or brake commands) to track the desired acceleration.  
\begin{figure}[tp]
    \centering
    \includegraphics[width=\linewidth]{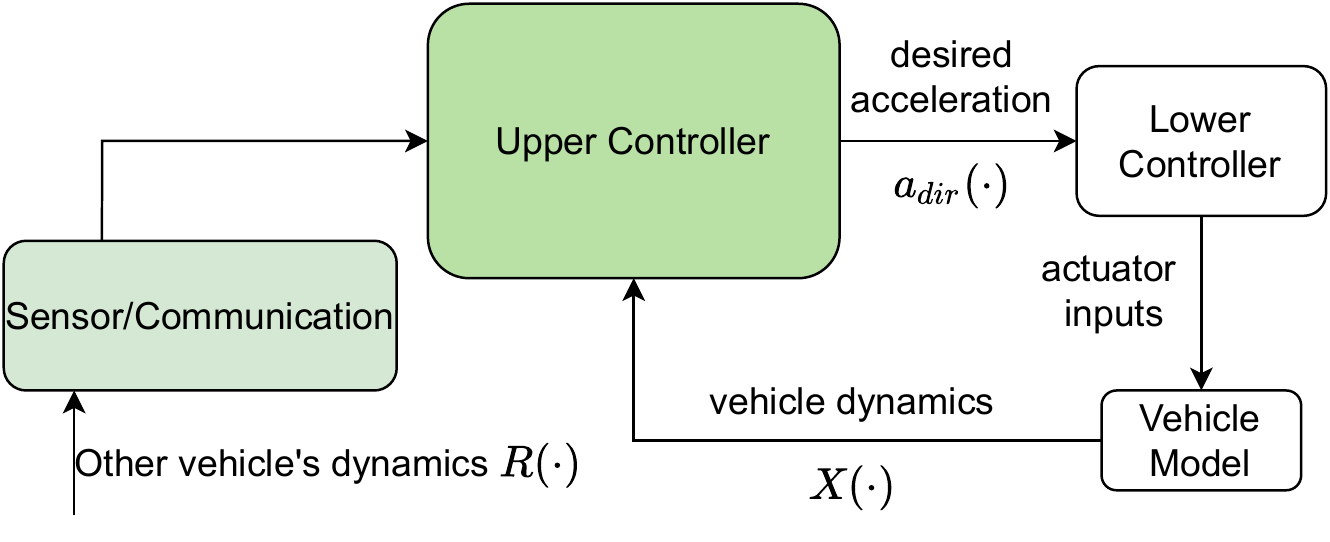}
    \caption{Structure of the longitudinal control system from the perspective of a vehicle in a platoon.}
    \label{fig:HighLevelControlStuct}
\end{figure}

In more detail, consider a platoon with $2\leq N< \infty$ vehicles. The $i^{th}$ vehicle's state is defined as $[x_i(t), v_i(t)]^T$, where $x_i(t)\in \mathbb{R}$ and $v_i(t)\in \mathbb{R}$ are the position and velocity of vehicle $i$. For the present work, we focus on the upper level controller in which each vehicle follows second-order dynamics
\begin{equation} \label{eq:UpperController}
\begin{split}
    \dot x_i(t) &= v_i(t),\\
    \dot v_i(t) &= u_i(t),
\end{split}
\end{equation}
where $u_i(t)$ is the control input (acceleration) to the system. To simplify the notation, time $t$ is omitted thereafter. The control policies considered in our defense framework can take either of the following two forms:\\

\noindent \textbf{1)} \textbf{Cooperative Adaptive Cruise Control (CACC) } \\
The control input in CACC is given by
\begin{equation}\label{CACC]}
    u_i = \sum_{j\in \mathpzc{N}_{i}} \alpha_{ij}(x_i-x_j+L_{ij}) +\sum_{j\in \mathpzc{N}_{i}}\beta_{ij} (v_i-v_j)+\sum_{j\in \mathpzc{N}_{i}} \gamma_{ij} a_j ,
\end{equation}
where the set $\mathpzc{N}_{i}$ contains vehicles that communicate with vehicle $i$ (i.e., vehicle  $i$'s \textit{neighbors}) and $a_j$ is the acceleration of vehicle $j$. Here $\alpha_{ij}\in \mathbb{R}$, $\beta_{ij}\in \mathbb{R}$ and $\gamma_{ij}\in \mathbb{R}$ are controller gains. In this way, the $i^{th}$ vehicle adjusts the desired acceleration in order to coordinate its speed with its \textit{neighbors} and to maintain the relative position of itself with its  around a desired (or targeted) inter-vehicle distance $L_{ij}$. 
We choose the desired distance $L_{ij} = L\Delta_{i,j}$, where
$\Delta_{i,j}\in \mathbb{N}$ is the number of vehicles (hops) between vehicle $i$ and $j$, and $L$ is constant and uniform for all vehicles of the platoon. 
Note that other vehicles' dynamics information, the position, speed and acceleration tuple $(x_j,v_j,a_j)$, is acquired via wireless communication through a V2V communication network, which can be implemented for example in the form of a 5G or vehicular ad-hoc network. \\

\noindent \textbf{2)} \textbf{Adaptive Cruise Control (ACC)}\\
In this controller, the control input is given by
\begin{equation}\label{ACC]}
    u_i =\sum_{j\in \mathpzc{N}_{i}} \alpha_{ij}(x_i-x_j+L_{ij}) +\sum_{j\in \mathpzc{N}_{i}}\beta_{ij} (v_i-v_j) ,
\end{equation}
where the set $j\in \mathpzc{N}_{i}$ contains the \textit{neighbors} of vehicle $i$ that are detectable by on-board range sensors. As above, $\alpha_{ij}\in \mathbb{R}$ and $\beta_{ij}\in \mathbb{R}$ are control gains. ACC control policy uses only the relative position and velocity as feedbacks in order to generate the desired acceleration to maintain a prefixed  inter-vehicle distance $L_{ij}$ and relative velocity. In comparison to CACC, other vehicles' dynamics information is obtained only by sensor measurements, which we assume in this paper to be reliable unlike communication messages.

In general, the double-integrator feedback system \eqref{eq:UpperController} for vehicle $i$ can be represented as 
\begin{equation}\label{eq:MatrixForm}
    \dot z = Az + B R,
\end{equation}
where $z = [x_i(t), v_i(t)]^T$ is the state vector
for vehicle $i$ (abusing notation for simplicity) and 
$$ R=  [[x_{j}-L_{ij}] [v_{j}] [a_j]]^T, \ \ \forall j\in \mathpzc{N}_{i},$$ 
is an external input vector that consists of other vehicle dynamics, where $[\cdot]$ represent a row vector of appropriate size. The matrix $A$ will have the following matrix forms respectively, depending on whether control policy CACC \eqref{CACC]} or ACC \eqref{ACC]} is in place
\begin{subequations}
 \begin{equation}
         \label{A:CACC}   
         A_{CACC} = \left[
\begin{array}{cc}
 0 & 1 \\
 k_1 & k_2 \\
\end{array}
\right]= \left[
\begin{array}{cc}
 0 & 1 \\
 \sum_{j\in \mathpzc{N}_{i}} \alpha_{ij} & \sum_{j\in \mathpzc{N}_{i}}\beta_{ij}  \\
\end{array}
\right],
\end{equation}
 \begin{equation}
         \label{A:ACC}   
        A_{ACC} = \left[
\begin{array}{cc}
 0 & 1 \\
 k_3 & k_4 \\
\end{array}
\right]= \left[
\begin{array}{cc}
 0 & 1 \\
 \sum_{j\in \mathpzc{N}_{i}} \alpha_{ij} & \sum_{j\in \mathpzc{N}_{i}}\beta_{ij}  \\
\end{array}
\right],
\end{equation}
\end{subequations}
where $k$'s consist of the corresponding combinations of $\alpha$ and $\beta$ parameters as in \eqref{CACC]} and \eqref{ACC]}. Similarly, the matrix $B$ takes the following forms

\begin{subequations}
 \begin{equation}
     B_{CACC} = 
     \begin{bmatrix}
        [0]  & [0] &  [0]\\
      [-\alpha_{ij}]  & [-\beta_{ij}] & [\gamma_{ij}] \\
     \end{bmatrix}, \ \ \forall j\in \mathpzc{N}_{i},
 \end{equation}
  \begin{equation}
     B_{ACC} = \begin{bmatrix}
     \begin{array}{ccc}
        [0]  &  [0] &  [0]\\
        [-\alpha_{ij}]  & [-\beta_{ij}]&[0]\\
     \end{array}
     \end{bmatrix},\ \ \forall j\in \mathpzc{N}_{i},
 \end{equation}
\end{subequations}
where $[\cdot]$ represents a row vector of appropriate size. \\[-0.75em]

\noindent\textbf{Assumptions.} To simplify our analysis, we adopt a specific CACC setup as in \cite{segata2014plexe}, which is based on the \textit{predecessor-leader following} information flow topology \cite{zheng2015stability}. In particular, each vehicle receives communicated position, velocity and acceleration information from only the platoon leader and its immediate proceeding vehicle, which equivalently means $\mathpzc{N}_{i} = \{1,\,i-1\}$ in \eqref{CACC]}. 
We also assume each vehicle in the platoon only equips a Radar sensor at the vehicle's front, which measures the position and velocity of its predecessor (i.e., $\mathpzc{N}_{i} = \{i-1\}$ in \eqref{ACC]}). The platoon leader is also assumed to be driven by a human driver who is not affected by communication attacks or sensor noise. Communication noise and sensor noise are ignored due to their low impact on system safety compared to intentional attacks.

\subsection{Basic Platoon Stability Analysis}\label{BasicStability}

In the platoon operation with constant spacing policy, two fundamental specifications, namely individual vehicle stability and string stability, must be satisfied to achieve long-term stable operation.
\begin{definition}[Individual Vehicle Stability \cite{rajamani2011vehicle}]
 Let's define the spacing error for the $i^{th}$ vehicle to be $\epsilon_i = x_i-x_{i-1}+L$. The system \eqref{eq:MatrixForm} is said to have individual vehicle stability if the following condition holds:
\begin{equation}\label{indStableCond}
     \ddot x_{i-1} \rightarrow 0 \; \Rightarrow \; \epsilon_i\rightarrow 0,
     \;\;  i = 2,\dots,N .
\end{equation}
\end{definition}

This definition essentially means a vehicle achieves asymptotic stability if the preceding vehicle operates at a constant velocity.  Since (\ref{eq:MatrixForm}) is a linear time-invariant (LTI) system, this stability can be achieved according to Lemma~\ref{individualStabilityLemma}.
\begin{lemma}\label{individualStabilityLemma}
If conditions \eqref{CACCinvCod} hold for the CACC control system, then the system achieves bounded-input-bounded-output (BIBO) stability, and therefore individual vehicle stability is guaranteed.  
\begin{subequations}\label{CACCinvCod}
 \begin{equation}
    k_{1}<0,
\end{equation} 
\begin{equation}
k_{2}\leq -2 \sqrt{-k_{1}}.
\end{equation} 
\end{subequations}
\end{lemma}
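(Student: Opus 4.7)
The plan is to argue BIBO stability by reducing it to the Hurwitz property of $A_{CACC}$, using the fact that \eqref{eq:MatrixForm} is a linear time-invariant system driven by an external input $R$. The work then splits into (i) an eigenvalue calculation for the $2{\times}2$ matrix $A_{CACC}$, and (ii) translating the Hurwitz property into the implication $\ddot x_{i-1}\to 0\Rightarrow \epsilon_i\to 0$ demanded by Definition 1.

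First I would compute the characteristic polynomial $\det(\lambda I - A_{CACC}) = \lambda^2 - k_2\lambda - k_1$ and read off the roots $\lambda_\pm = \tfrac{1}{2}\bigl(k_2\pm\sqrt{k_2^{\,2}+4k_1}\bigr)$. The hypothesis $k_1<0$ makes $\sqrt{-k_1}$ real, and $k_2\leq -2\sqrt{-k_1}$ is equivalent to the pair of conditions $k_2^{\,2}\geq -4k_1$ and $k_2<0$; the first ensures a nonnegative discriminant, so both eigenvalues are real. The root $\lambda_-$ is immediately negative because $k_2<0$ and the square-root term is nonnegative. For $\lambda_+$, the required strict inequality $\sqrt{k_2^{\,2}+4k_1}<-k_2$ reduces, after squaring the two nonnegative sides, to $4k_1<0$, which is exactly the first hypothesis. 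Hence both eigenvalues sit in the open left half-plane, $A_{CACC}$ is Hurwitz, and BIBO stability of \eqref{eq:MatrixForm} follows.

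To derive individual vehicle stability from BIBO stability, I would exploit the structure of the input $R$ under the predecessor-leader information flow, where the entries of $R$ are the leader's and immediate predecessor's position, velocity, and acceleration signals. When $\ddot x_{i-1}\to 0$ the acceleration entries vanish, the velocity entries tend to constants, and (working in a frame that co-moves with the leader) $R$ becomes asymptotically stationary; the Hurwitz property then drives the state to the unique equilibrium $z^\ast = -A_{CACC}^{-1}B R^\ast$, and substituting this equilibrium into $\epsilon_i = x_i - x_{i-1}+L$ yields $\epsilon_i=0$, closing \eqref{indStableCond}. The eigenvalue portion of the argument is a single application of the quadratic formula, so I expect the main subtlety to lie here instead: one must carefully account for the leader contribution in $R$ together with the chosen desired spacing $L_{ij}=L\Delta_{i,j}$ to verify that the steady state truly produces zero spacing error rather than merely a bounded residual.
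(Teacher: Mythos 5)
Your proposal is correct and follows the same route as the paper, which simply invokes the standard fact that an LTI system is BIBO stable iff $A$ is Hurwitz; your explicit quadratic-formula verification that conditions \eqref{CACCinvCod} place both eigenvalues of $A_{CACC}$ in the open left half-plane is exactly the calculation the paper leaves implicit. Your additional discussion of passing from BIBO stability to the spacing-error implication \eqref{indStableCond} (via the co-moving frame and the equilibrium $z^\ast=-A_{CACC}^{-1}BR^\ast$) goes beyond what the paper writes down, and the subtlety you flag there --- accounting for the leader term in $R$ under the spacing $L_{ij}=L\Delta_{i,j}$ --- is real but does not affect the correctness of the stated lemma.
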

\begin{proof}
The proof is standard BIBO argument. An LTI continuous-time system with state representation ($A,B,C$) is BIBO stable if and only if the eigenvalues of $A$ are in the left-hand complex plane. In other words, $A$ is a Hurwitz matrix \cite{chen1984linear}.
\end{proof}

Similarly, the following conditions on control gain must hold to achieve individual vehicle stability for ACC control policy
\begin{subequations}\label{ACCinvCod}
 \begin{equation}
    k_{3}<0,
\end{equation} 
\begin{equation}
k_{4}\leq -2 \sqrt{-k_{3}}.
\end{equation} 
\end{subequations}
\begin{remark}
Even if each system is analytically proven to be stable, the bounded solution resulting from bounded disturbances may violate practical constraints that lead to collisions. More details will be discussed in later sections.
\end{remark}
If the preceding vehicle is not operating at constant velocity (i.e., accelerating or braking), the spacing error $\epsilon_i$
is expected to be nonzero. Therefore it is important to make sure spacing errors are guaranteed not to amplify as they propagate down to the tail of the platoon. 
\begin{definition}[String Stability \cite{rajamani2011vehicle}]\label{def:stringStablity}

Let's define the spacing error for the $i^{th}$ vehicle to be $\epsilon_i = x_i-x_{i-1}+L$. The system \eqref{eq:MatrixForm} is said to have string stability if the following condition holds:
\begin{equation}\label{stringStability}
     \|\epsilon_i\|_\infty\leq \|\epsilon_{i-1}\|_\infty,
     \;\;  i = 2,\dots,N .
\end{equation}
\end{definition}
\begin{theorem}[\cite{rajamani2011vehicle}]
Let the spacing errors of consecutive vehicles be related by the transfer function $\hat{H}(s) = \frac{\epsilon_i}{\epsilon_{i-1}}$. The string stability condition \eqref{stringStability} holds, if 
\begin{equation*}
    \|\hat{H}(s)\|_\infty\leq 1,
\end{equation*} 
\begin{equation*}
     \hat{h}(t) > 0,
\end{equation*}
 where $\hat{h}(t)$ is the impulse response of $\hat{H}(s)$. \end{theorem}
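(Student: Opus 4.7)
The plan is to pass to the time domain, bound $\epsilon_i$ as a convolution, and then use positivity of $\hat{h}$ to tie together the frequency-domain and time-domain norms. Since $\epsilon_i(s)=\hat{H}(s)\,\epsilon_{i-1}(s)$, I would write the corresponding time-domain relation $\epsilon_i(t)=(\hat{h}*\epsilon_{i-1})(t)=\int_0^{t}\hat{h}(t-\tau)\,\epsilon_{i-1}(\tau)\,d\tau$, assuming zero initial conditions for the error dynamics (standard in this setting, since pre-perturbation the platoon is assumed in steady state).

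Next I would apply the standard convolution inequality to get $\|\epsilon_i\|_\infty\le \|\hat{h}\|_1\,\|\epsilon_{i-1}\|_\infty$. So the whole theorem reduces to showing $\|\hat{h}\|_1\le 1$ under the two hypotheses. This is where the positivity assumption $\hat{h}(t)>0$ does the work: if $\hat{h}$ is nonnegative, then $|\hat{h}(t)|=\hat{h}(t)$, so $\|\hat{h}\|_1=\int_0^\infty \hat{h}(t)\,dt=\hat{H}(0)$, where I am just evaluating the Laplace transform at $s=0$.

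Then I would use the elementary bound $|\hat{H}(j\omega)|=\bigl|\int_0^\infty \hat{h}(t)e^{-j\omega t}dt\bigr|\le \int_0^\infty \hat{h}(t)\,dt=\hat{H}(0)$, which, combined with $\hat{H}(0)=|\hat{H}(j0)|$, shows that the sup of $|\hat{H}(j\omega)|$ over $\omega\in\mathbb{R}$ is attained at $\omega=0$. Therefore $\|\hat{h}\|_1=\hat{H}(0)=\|\hat{H}\|_\infty\le 1$ by the first hypothesis, and chaining back gives $\|\epsilon_i\|_\infty\le \|\epsilon_{i-1}\|_\infty$, which is exactly \eqref{stringStability}.

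The main obstacle I anticipate is not computational but rather making the two standing assumptions visible and defensible: (i) that $\hat{H}(s)$ is a proper, stable transfer function so that $\hat{h}\in L^1$ and the Laplace transform evaluates at $s=0$ unambiguously, and (ii) that the error dynamics are initially at rest so the convolution representation is exact rather than carrying an initial-condition term. Both are consistent with the individual-vehicle stability already established by Lemma~\ref{individualStabilityLemma} and the constant-spacing steady-state setup of Section~\ref{BasicStability}, so I would note them as preconditions at the start of the argument and then proceed along the three-step chain above.
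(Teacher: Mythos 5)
Your argument is correct and is exactly the standard proof of this result from the cited reference \cite{rajamani2011vehicle}: bound $\|\epsilon_i\|_\infty$ by $\|\hat{h}\|_1\,\|\epsilon_{i-1}\|_\infty$ via the convolution inequality, then use positivity of $\hat{h}$ to identify $\|\hat{h}\|_1$ with $\hat{H}(0)\leq\|\hat{H}\|_\infty\leq 1$. The paper itself supplies no proof for this theorem (it is quoted from the reference and the related string-stability derivations are explicitly omitted), so there is nothing further to compare against; your explicit flagging of the stability and zero-initial-condition preconditions is appropriate.
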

The string stability proofs for both systems are omitted  due to space limitation. In summary, the ACC controller can achieve individual vehicle stability via proper controller tuning but fails to ensure string stability under a constant spacing policy. Whereas the CACC controller achieves both stabilities when vehicle-to-vehicle communication is established.  

\section{ATTACK MODEL} \label{Attack Model}
We consider a particular type of communication attack, namely a \textit{message falsification} attack as described in \cite{qayyum2020securing}. By continuously monitoring the communication network, the adversary may change the content of received messages and subsequently insert them back into the network. The presence of this type of attack could cause instabilities to the vehicle platoon or even collisions.

Let $\mathcal{U}$ represent the set of vehicles that are affected by the attack. The state of an affected vehicle evolves as 
\begin{equation} \label{eq:CACC_withAttack}
\begin{split}
    \dot x_i(t) &= v_i(t),\\
    \dot v_i(t) &= u_i(t)+\xi(t),\ \ i\in \mathcal{U},
\end{split}
\end{equation}
where $\xi(t)$ is the intentional modifications on communicated messages. Unlike a noise term that only moderately degrades the system performance, an adversary could target specific platoon members and carry out stealthy, adaptive and aggressive attacks, which compromises platoon safety.
\begin{assumption}\label{def:CommAttacknotAffectSensor}
 Communication-related cyber-attacks do not alter the physical properties of individual vehicles, which means the integrity of the vehicle sensors and controllers are protected.
\end{assumption}

Assumption~\ref{def:CommAttacknotAffectSensor} clarifies our focus on communication-based attacks in this paper.
In comparison to attacks on actuators and sensors, \textit{message falsification} attacks do not directly alter the targeted physical systems but achieve malicious outcomes through modifications of system inputs. Although the ACC controller shares similar feedback structure (i.e., $A$ matrix structure) with CACC, they have different mechanisms to obtain other vehicle dynamics information (through on-board sensors). The immunity against \textit{message falsification} attack and the fact that physical system is still reliable under such attack mean the ACC controller is an appropriate  supplementary controller in such an adversarial environment. However, \textit{recall that ACC can not guarantee string stability, which limits its suitability for long-term deployment in a platoon}.


\section{CONTROLLER SWITCHING FOR ATTACK MITIGATION}\label{DefenseFramework}
 In this section, we  firstly present the overall structure of the  dual-mode control system reconfiguration scheme and find the conditions on controller gains that lead to GUES and string stability for the proposed system.

\begin{figure}[tb]
    \centering
    \includegraphics[width=0.9\columnwidth]{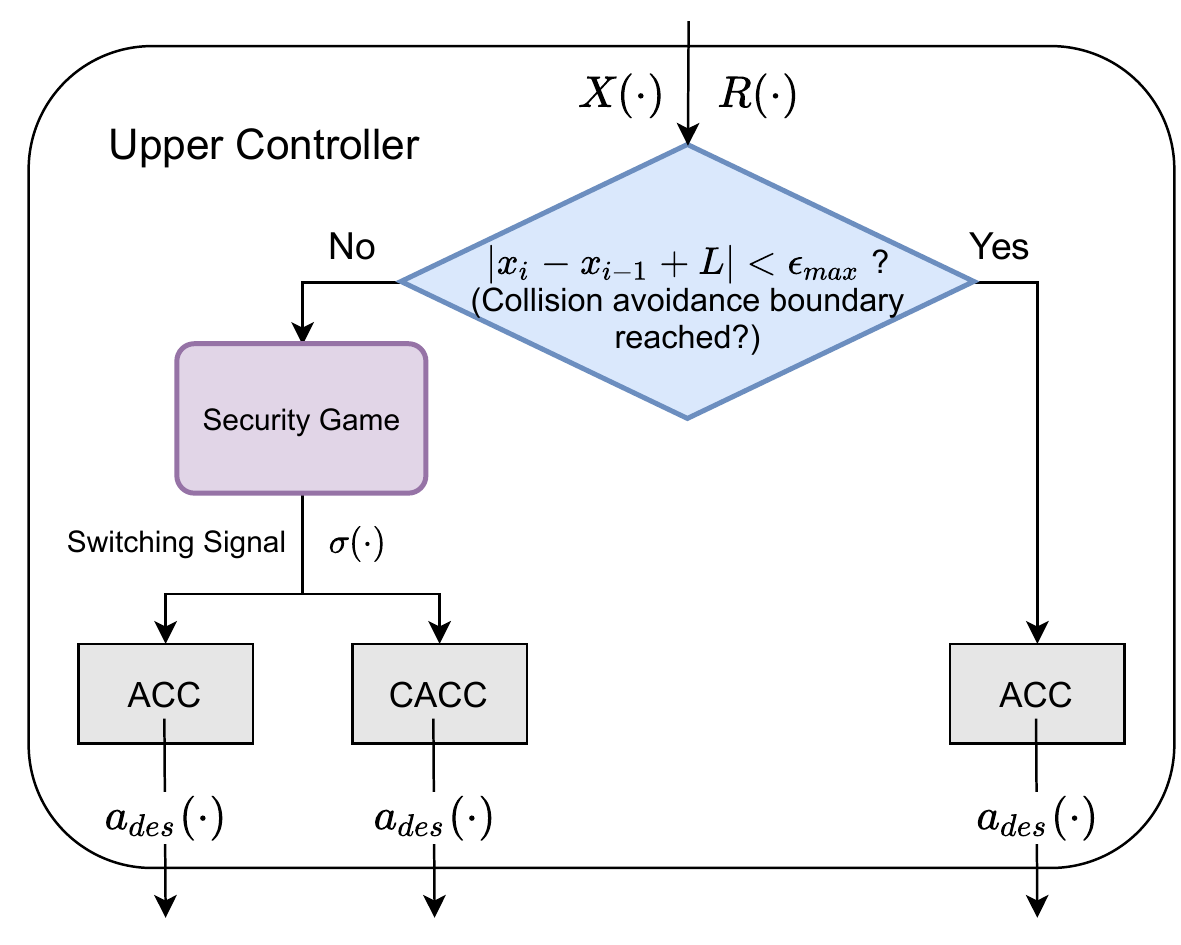}
    \caption{Structure of Upper Controller.}
    \label{fig:UpperControllerStructure}
\end{figure}

 \subsection{A Security Game-based Switched System}

Given the benefits and limitations of both ACC and CACC controllers discussed in Section~\ref{Attack Model}, we suggest using ACC as a secondary controller operating as a back-up when the communication network is suspicious. In this way, the advantages of both controllers will be retained while minimising the effects caused by cyber-attacks. 

\begin{figure}[tb]
    \centering
    \includegraphics[width=0.8\columnwidth]{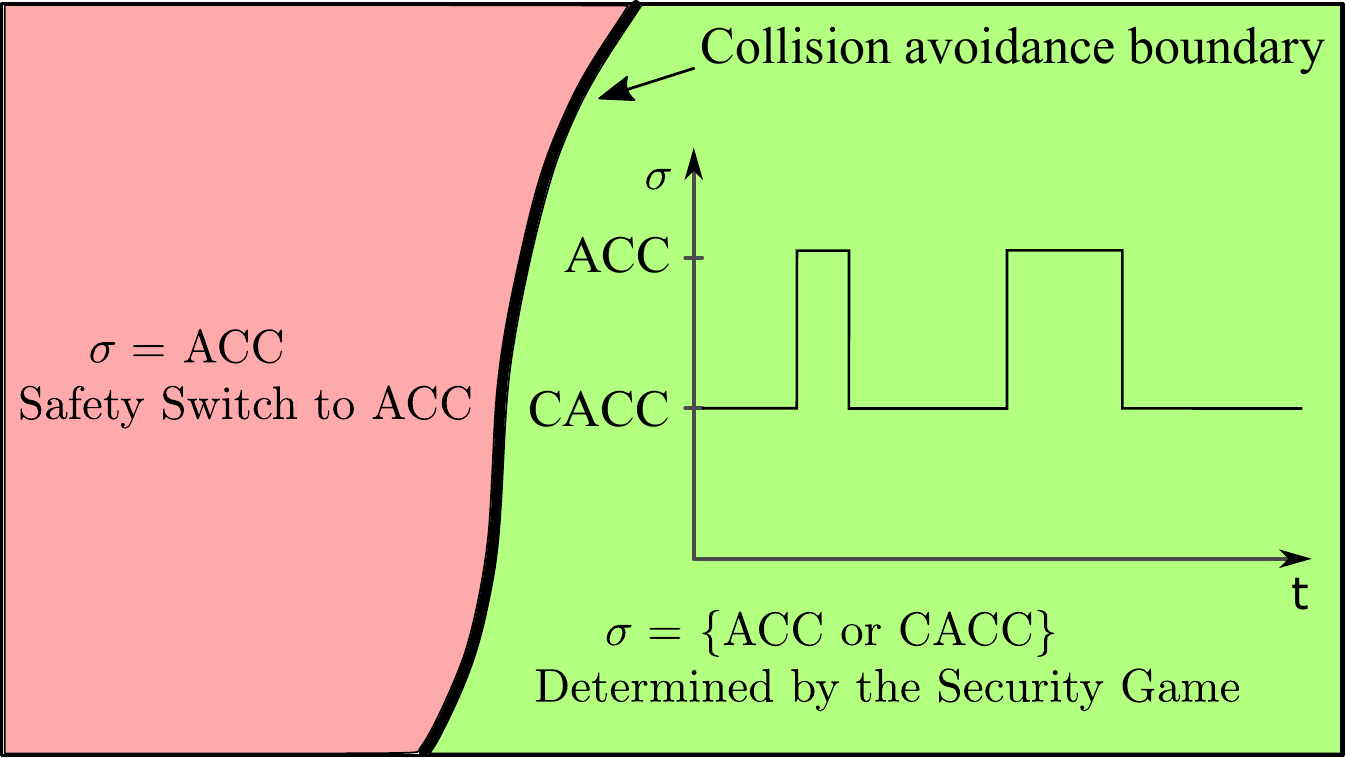}
    \caption{State space representation of the switched system.} 
    \label{fig:switchedspace}
\end{figure}
The overall structure of the improved upper controller is shown in Fig.~\ref{fig:UpperControllerStructure}, which delivers insight into attack detection and mitigation. Both controllers form a switched system with the switching decision coming from novel game-theoretic analysis of the attacker and the defender's interactions. Details of the game structure will be discussed in Section~\ref{Game}. A dedicated state constraint further enhances safety against bounded but aggressive message modifications in which the solution is bounded but violates practical constraints ($\epsilon_{max}$), which represent vehicles nearly crashing. The resulting state space is visualised in Fig.~\ref{fig:switchedspace}.

Formally, the switched system combined attack disturbance can be written in matrix form
\begin{subequations}\label{switchedSys}
\begin{equation}
\begin{split}
    \dot z = A_p z  +B_pR+M_p\mathbf{\xi}, \ \ p\in \mathcal{P},
    \end{split}
\end{equation}
\begin{equation}
    p = \begin{cases}
      ACC, & \text{if $|x_i-x_{i-1}+L| < \epsilon_{max}$},\\
      \sigma(t),& \text{otherwise},
    \end{cases}  
\end{equation}
\end{subequations}
where $\mathcal{P}$ is an index set of available subsystems and $\mathcal{P} = \{CACC, ACC\}$ consisting of two control systems in our case, $\sigma(t): [0,\infty) \rightarrow \mathcal{P}$ is a piecewise right-continuous constant function generated based on the security game that specifies the index of the active system at each time, $\mathbf{\xi}$ represent attack effects, $\{(A_p,B_p,M_p)\}_{p\in \mathcal{P}}$ is a set of state matrix triples with suitable dimensions for different control systems and the state constraint $|x_i-x_{i-1}+L| < \epsilon_{max}$ is the aforementioned switching surface, which defines the collision avoidance boundary in Fig.~\ref{fig:UpperControllerStructure} and \ref{fig:switchedspace}. Note that, $M_{ACC}=0$ based on Assumption~\ref{def:CommAttacknotAffectSensor}.

\subsection{Stability Analysis}
Stability analysis of such a switched system is necessary: even if both controllers satisfy the individual vehicle stability condition~\eqref{indStableCond}, unconstrained switching may destabilize such a switched system \cite{liberzon2003switching}.

\begin{definition}[Global Uniform Exponential Stability  \cite{liberzon2003switching}]
A switched system has global uniform exponential stability (GUES) if there exist a positive constant $\delta>0$ such that for all switching signals $\sigma$ the solutions of (\ref{switchedSys}) with initial state $|z(0)|\leq \delta$ satisfy 
\begin{equation}\label{switchedSys_stableCond}
    |z(t)|\leq c|z(0)|\exp^{-\lambda t}, \ \ \forall t\geq 0, 
\end{equation} for some $c,\lambda > 0$.
\end{definition}

\begin{theorem}
If conditions (\ref{commonV}) are satisfied, then all sub-systems of the platoon in (\ref{switchedSys}) share a radially unbounded common Lyapunov function, and therefore the switched system has global uniform exponential stability.
\end{theorem}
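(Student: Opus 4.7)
The plan is to establish global uniform exponential stability by exhibiting a common quadratic Lyapunov function $V(z)=z^{\top}Pz$ with $P=P^{\top}\succ 0$ that simultaneously certifies Hurwitz behaviour for both closed-loop matrices $A_{CACC}$ and $A_{ACC}$. This is the standard sufficient condition for GUES of a switched linear system (see the multiple Lyapunov function theory of Liberzon): if $A_p^{\top}P+PA_p=-Q_p$ with $Q_p\succ 0$ for every $p\in\mathcal{P}$, then along any switching signal $\sigma(t)$ one has $\dot V\le -\lambda V$ with $\lambda=\min_p\lambda_{\min}(Q_p)/\lambda_{\max}(P)>0$, and standard comparison gives the bound (\ref{switchedSys_stableCond}) with $c=\sqrt{\lambda_{\max}(P)/\lambda_{\min}(P)}$ after taking square roots. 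Radial unboundedness of $V$ is immediate from $P\succ 0$.

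The concrete steps I would carry out are as follows. First, exploit the companion-form structure already exposed in (\ref{A:CACC})--(\ref{A:ACC}): both matrices share the top row $[0\ 1]$ and differ only in the bottom row entries $(k_1,k_2)$ versus $(k_3,k_4)$, with Lemma~\ref{individualStabilityLemma} and (\ref{ACCinvCod}) guaranteeing each is individually Hurwitz. Parameterise $P=\bigl(\begin{smallmatrix}p_{11}&p_{12}\\p_{12}&p_{22}\end{smallmatrix}\bigr)$ with $p_{11}>0$ and $p_{11}p_{22}-p_{12}^{2}>0$, compute $A_p^{\top}P+PA_p$ for each $p$, and demand negative definiteness. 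The resulting Sylvester-criterion inequalities on the entries of $P$ couple the four gains $(k_1,k_2,k_3,k_4)$; rearranging them yields the inequalities that presumably constitute (\ref{commonV}). Second, verify that under those inequalities a concrete $P$ can be chosen (for example by aligning the free parameter $p_{12}$ between the two constraints), which is enough to invoke the GUES conclusion. The disturbance terms $B_pR$ and $M_p\xi$ do not enter the GUES definition as stated, so the analysis focuses on the homogeneous system; their bounded effect would be handled separately as an input-to-state-stability remark.

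The main obstacle is the common-Lyapunov-function existence question itself. For two Hurwitz matrices in $\mathbb{R}^{2\times 2}$ sharing the same companion structure, a common $P$ need not exist for arbitrary Hurwitz pairs; classical results (Shorten--Narendra) require, for instance, that the matrix pencil $\alpha A_{CACC}+(1-\alpha)A_{ACC}$ be Hurwitz for all $\alpha\in[0,1]$, or equivalently that $A_{CACC}A_{ACC}^{-1}$ have no negative real eigenvalues. Translating such a pencil condition into clean algebraic inequalities on $(k_1,k_2,k_3,k_4)$ is the delicate step and is almost certainly what (\ref{commonV}) encodes. I would proceed by writing the characteristic polynomial of the convex combination and enforcing positivity of its coefficients, producing a pair of inequalities relating the CACC gains to the ACC gains that refine the single-system conditions (\ref{CACCinvCod}) and (\ref{ACCinvCod}). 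Once those inequalities are in hand the remainder of the proof is a routine quadratic Lyapunov computation feeding into the exponential bound above.
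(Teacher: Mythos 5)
Your proposal is correct and follows essentially the same route as the paper: both posit a common quadratic Lyapunov function $V(z)=z^{T}Pz$ with $P=P^{T}>0$, impose $A_p^{T}P+PA_p<0$ for each of the two companion-form subsystems, and read off the resulting Sylvester-criterion inequalities on $(p_{11},p_{12},p_{22})$ and the gains $(k_1,k_2,k_3,k_4)$, which is exactly what the conditions in (\ref{commonV}) encode. Your added derivation of the exponential bound from the common Lyapunov function and the aside on Shorten--Narendra pencil conditions go slightly beyond what the paper writes out (it simply cites Chapter 2 of Liberzon for that implication), but the core argument is the same.
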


\begin{proof}
The proof is based on the results in Chapter 2 of \cite{liberzon2003switching}. It is natural to consider quadratic common Lyapunov functions of the form \eqref{quadraticLya} for switched linear systems, 
 \begin{equation}\label{quadraticLya}
    V(z) = z^T P z, \ \ \ \ P=P^T>0.
\end{equation}
Assuming that \eqref{CACCinvCod} and \eqref{ACCinvCod} are satisfied, we have to find a positive definite symmetric matrix $P$ such that the inequality \eqref{Lyaineqs} is fulfilled
\begin{equation}\label{Lyaineqs}
    A_p^TP+PA_p<0,\ \ \ \forall p\in \mathcal{P}.
\end{equation}
The inequality $M < N$ for two symmetric matrices M and N means that the matrix $M-N$ is negative definite. Note that, if a matrix is positive (negative) definite, all its eigenvalues are positive (negative).

For the switched system consisting of $A_{CACC}$ \eqref{A:CACC} and $A_{ACC}$ \eqref{A:ACC}, $P=\left[
\begin{array}{cc}
 p_{11} & p_{12} \\
 p_{12} & p_{22} \\
\end{array}
\right]$ has to satisfy \eqref{commonV} to guarantee GUES.

\begin{subequations}\label{commonV}
\begin{align}
    p_{11}>0,  & \;\; p_{12}>0, \\
    p_{22}&>\frac{p_{12}^2}{p_{11}},\\
    k_1<0,  & \;\; k_{3}<0,
\end{align}
\\[-2.75em]
\begin{align}
    k_2&>\frac{-2 p_{12} \sqrt{k_1-\frac{k_{1} p_{11} p_{22}}{p_{12}^2}}+k_{1} p_{22}-p_{11}}{p_{12}},\\
        k_{2}&<\frac{2 p_{12} \sqrt{k_{1}-\frac{k_{1} p_{11} p_{22}}{p_{12}^2}}+k_{1} p_{22}-p_{11}}{p_{12}},\\
          k_{4}&>\frac{-2 p_{12} \sqrt{k_{3}-\frac{k_{3} p_{11} p_{22}}{p_{12}^2}}+k_{3} p_{22}-p_{11}}{p_{12}},\\
     k_{4}&<\frac{2 p_{12} \sqrt{k_{3}-\frac{k_{3} p_{11} p_{22}}{p_{12}^2}}+k_{3} p_{22}-p_{11}}{p_{12}}.
\end{align}
     
\end{subequations}
\end{proof}

\begin{remark}
The conditions in \eqref{commonV} are sufficient but not necessary to achieve GUES. There may be other types of common Lyapunov functions other than \eqref{quadraticLya}, which may lead to other sufficient conditions.
\end{remark}

String stability of vehicular platooning is important for long-term stable platoon operation. However, as discussed in Section~\ref{BasicStability}, sensor-based control policy ACC fails to guarantee this property. Due to erroneous detection results (e.g., false alarms), the switching signal $\sigma(\cdot)$ generated from the security game may initiate control system reconfiguration process even if the vehicle is not exposed to an attack. Therefore, we present a lower bound on the dwell time $\tau_n$ for the string stable controller CACC to retain platoon string stability of the switched system in an attack-free environment. This constraint is updated dynamically at the beginning of each interval on which $\sigma=CACC$ based on the current system states.
\begin{theorem}\label{StringinBenignEnv}
Consider a dual-mode control system reconfiguration scheme that consists of a control policy (e.g., CACC) which guarantees platoon string stability and a control policy (e.g., ACC) which cannot guarantee string stability. Assume CACC is globally exponentially string stable with a Lyapunov function $V$ satisfying 
\begin{equation}\label{eq:radiallyUnbounded}
    a\abs{z}^2\leq V(z)\leq b\abs{z}^2,
\end{equation}
and \begin{equation}\label{eq:negDefinite}
    \dot V(z)\leq -c\abs{z}^2,
\end{equation}
for some positive constants $a$, $b$ and $c$.
Suppose switching signal $\sigma$ alternatively chooses CACC on $[t_n, t_{n+1})$ and ACC on $[t_{n+1},t_{n+2})$ where $n\in\mathbb{Z}_{\geq 0}$, and repeats this infinitely many times. The switched system \eqref{switchedSys} guarantees platoon string stability if the dwell time $\tau_n$ for the stable system CACC satisfies
\begin{equation}\label{eq:switchingConstraint}
    \tau_n > \frac{1}{\lambda}\log \abs{z(t_n)},
\end{equation}
where $\lambda = \frac{c}{2b}$.
\end{theorem}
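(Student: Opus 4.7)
The plan is to leverage the Lyapunov contraction on each CACC epoch, together with the dwell-time hypothesis, to force $\abs{z}$ below a fixed threshold at every switch out of CACC, and then to translate this temporal contraction into the spatial spacing-error inequality of Definition~\ref{def:stringStablity}.

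First, on any CACC interval $[t_n, t_{n+1})$ the sandwich bound \eqref{eq:radiallyUnbounded} converts the quadratic dissipation \eqref{eq:negDefinite} into $\dot V(z) \leq -c\abs{z}^2 \leq -(c/b)V(z)$. Applying a Gr\"onwall-type comparison gives $V(z(t)) \leq V(z(t_n))\exp(-(c/b)(t-t_n))$, and then the lower sandwich bound $a\abs{z}^2 \leq V(z)$ yields
\begin{equation*}
    \abs{z(t)} \leq \sqrt{b/a}\,\abs{z(t_n)}\exp(-\lambda(t-t_n)),\qquad \lambda = \tfrac{c}{2b}.
\end{equation*}

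Second, I would evaluate at $t = t_{n+1}$ and substitute the dwell-time hypothesis \eqref{eq:switchingConstraint}, $\tau_n > \lambda^{-1}\log\abs{z(t_n)}$. This makes the factor $\abs{z(t_n)}\exp(-\lambda\tau_n)$ strictly less than $1$, giving $\abs{z(t_{n+1})} < \sqrt{b/a}$ uniformly in $\abs{z(t_n)}$. This is the contracting-map step: every CACC epoch drives the aggregate state into a fixed sublevel set of $V$, regardless of how large the state was when the epoch began.

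Third, because the Lyapunov function $V$ by hypothesis witnesses platoon string stability for CACC, the magnitude $\abs{z}$ encodes the spacing error $\epsilon_i$. The uniform cap $\sqrt{b/a}$ attained at every switch to ACC therefore bounds $\|\epsilon_i\|_\infty$ entering the ACC epoch, and on $[t_{n+1}, t_{n+2})$ Lemma~\ref{individualStabilityLemma} applied to $A_{ACC}$ yields BIBO boundedness, preventing blow-up. Stitching these two effects together across successive epochs recursively establishes $\|\epsilon_i\|_\infty \leq \|\epsilon_{i-1}\|_\infty$ for $i = 2,\dots,N$, which is \eqref{stringStability}.

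The main obstacle will be the third step: the Lyapunov function $V$ gives a time-domain decay estimate on a single vehicle's aggregate state, whereas Definition~\ref{def:stringStablity} is a spatial comparison across consecutive platoon members. Making the translation rigorous requires either invoking the CACC transfer-function characterisation from Section~\ref{BasicStability} to cap the spatial amplification, or interpreting $V$ as an aggregate string-stability Lyapunov function so that contraction of $V$ directly implies the spacing-error inequality. Without further structural assumptions on $V$, bridging from ``$\abs{z(t_{n+1})} < \sqrt{b/a}$'' to ``$\|\epsilon_i\|_\infty \leq \|\epsilon_{i-1}\|_\infty$'' is the most delicate part of the argument and is where I expect the bulk of the technical work to concentrate.
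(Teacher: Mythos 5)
Your first two steps are sound and match the paper's opening move: combining \eqref{eq:radiallyUnbounded} and \eqref{eq:negDefinite} to get $\dot V \leq -2\lambda V$ with $\lambda = c/(2b)$ and hence exponential decay of $V$ over the CACC epoch. But the proposal has a genuine gap exactly where you flag it, and the route you sketch for closing it would not work. Your third step leans on Lemma~\ref{individualStabilityLemma} (BIBO stability of $A_{ACC}$) to ``prevent blow-up'' during the ACC epoch and then hopes to stitch epochs together into \eqref{stringStability}. BIBO stability is precisely the property that does \emph{not} rule out string instability --- the paper states explicitly that ACC is individually stable yet fails string stability under constant spacing --- so a uniform temporal cap $\abs{z(t_{n+1})} < \sqrt{b/a}$ plus boundedness on $[t_{n+1},t_{n+2})$ cannot yield the spatial non-amplification $\|\epsilon_i\|_\infty \leq \|\epsilon_{i-1}\|_\infty$ down the platoon. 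Boundedness in time and monotonicity across vehicle indices are different assertions, and nothing in your argument connects them.

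The paper closes this gap with a device you do not have: it quantifies the damage done during one ACC epoch by the worst-case amplification $\delta = \|\epsilon_{i-1}\|_\infty - \|\epsilon_i\|_\infty$, models the post-ACC state as $z(t_{n+1})+\delta$ so that $V(z(t_{n+2})) = V(z(t_{n+1})+\delta)$, and reduces string stability of the switched system to a per-cycle Lyapunov decrease $V(z(t_n)) > V(z(t_{n+2}))$, i.e.\ the CACC decrease $V(z(t_n))\left(1 - e^{-2\lambda\tau_n}\right)$ must dominate the ACC increase $V(z(t_{n+1})+\delta) - V(z(t_{n+1}))$. The dwell-time bound \eqref{eq:switchingConstraint} is then extracted by sandwiching both sides with \eqref{eq:radiallyUnbounded}. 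Here $V$ is read as a string-stability Lyapunov function (per the theorem's hypothesis), which is the interpretation you mention as one possible escape but never commit to or carry out. Without that cycle-wise bookkeeping of the ACC-induced amplification, your dwell-time condition is never actually used to compare the two epochs against each other, and the conclusion \eqref{stringStability} does not follow.
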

\begin{proof}
Combining \eqref{eq:radiallyUnbounded} and \eqref{eq:negDefinite}, we have
\begin{equation*}
    \dot V(z)\leq -2\lambda V(z),
\end{equation*} where $\lambda = \frac{c}{2b}$.
This leads to the inequality
\begin{equation}\label{eq:LyaChangedForm}
    V(z(t_n+\tau_n))\leq e^{-2\lambda \tau}V(z(t_n)).
\end{equation} provided that $\sigma(t) = CACC$ for $t\in [t_n, t_n+\tau_n)$,

\noindent 
Suppose $\delta =  \|\epsilon_{i-1}\|_\infty -\|\epsilon_i\|_\infty$ is the worst-case disturbance amplification as disturbance propagates down to the tail of the platoon with $\|\epsilon_i\|_\infty$ defined by \eqref{def:stringStablity}.
To compute an explicit lower bound on $\tau_n$ to guarantee string stability for the switched system, it is sufficient to ensure that
\begin{equation}\label{eq:essenseLyaCond}
    V(z(t_n))>V(z(t_{n+2})),
\end{equation}

Note that, $V(z(t_n+\tau_n)) = V(z(t_{n+1}))$ and $V(z(t_{n+1})+\delta) = V(z(t_{n+2}))$ because the Lyapunov function is continuous.
Therefore, it is equivalent to satisfy 
\begin{equation}\label{eq:interStep1}
    V(z(t_n))-V(z(t_n+\tau_n))> V(z(t_{n+1})+\delta)-V(z(t_{n+1})).
\end{equation}
From \eqref{eq:LyaChangedForm}, we have 
\begin{equation*}
    V(z(t_n))-V(z(t_n+\tau_n)) \geq V(z(t_n))-e^{-2\lambda \tau_n}V(z(t_n)).
\end{equation*}
Then, \eqref{eq:interStep1} will hold, if we have
\begin{equation*}
    V(z(t_n))+V(z(t_{n+1}))>e^{-2\lambda \tau_n}V(z(t_n))+V(x(t_{n+1})+\delta).
\end{equation*}
By virtue of \eqref{eq:radiallyUnbounded}, we have
\begin{subequations}
\begin{equation*}
    b\abs{z(t_n)}^2+b\abs{z(t_{n+1})}^2 \geq V(z(t_n))+V(z(t_{n+1})).
\end{equation*}
    \begin{equation*} 
       \scalebox{0.9}{$e^{-2\lambda \tau_n}V(z(t_n))+V(z(t_{n+1})+\delta) \geq a e^{-2\lambda \tau_n} \abs{z(t_n)}^2+a\abs{z(t_{n+1})+\delta}^2$}.
    \end{equation*}
\end{subequations}
Then, all we need to have is 
\begin{equation*}
    b\abs{z(t_n)}^2+b\abs{z(t_{n+1})}^2 >a e^{-2\lambda \tau_n} \abs{z(t_n)}^2,
\end{equation*} which can be rewritten as 
\begin{equation*}
    \tau_n > \frac{1}{2\lambda}\log{\frac{a\abs{z(t_n)}^2}{b(\abs{z(t_{n+1})}^2+\abs{z(t_n)}^2)}}.
\end{equation*}
This immediately yields the following lower bound on dwell time in
CACC:
\begin{equation*}
     \tau_n > \frac{1}{\lambda}\log \abs{z(t_n)}.
\end{equation*}
\end{proof}

Note that the assumption on exponential stability may be justified
in certain cases as discussed in \cite{feng2019string}. Other estimates in \eqref{eq:radiallyUnbounded} and \eqref{eq:negDefinite} can be used instead of quadratic ones. In essence, all we need is for \eqref{eq:essenseLyaCond} to hold for all switching times. We observe the further the system state deviation from the equilibrium point the larger $\tau_n$ should be, meaning that CACC controller should be activated for a longer time to compensate the negative effects brought by ACC in terms of string stability. Moreover, if CACC could be tuned to converge faster, i.e. $c$ is large, then
$\tau_n$ can be smaller.

\begin{remark}
The switched system may not guarantee string stability when the communication-based controller CACC is compromised in an adversarial environment. In this case, the proposed switched system can be seen as an emergency measure to prevent a collision. Providing a switching function that also guarantees string stability for this case is a research direction for future work.
\end{remark}

\subsection{Numerical Example}
  If state matrices \eqref{A:CACC} and \eqref{A:ACC} of two control systems take the following control gains
\begin{subequations}
 \begin{equation*}
         A_{CACC} = \left[
\begin{array}{cc}
 0 & 1 \\
 k_1 & k_2 \\
\end{array}
\right] = \left[
\begin{array}{cc}
 0 & 1 \\
 -1.58 & -2.51 \\
\end{array}
\right],
\end{equation*}

 \begin{equation*}
  A_{ACC\ } = \left[
\begin{array}{cc}
 0 & 1 \\
 k_3 & k_4 \\
\end{array}
\right] = \left[
\begin{array}{cc}
 0 & 1 \\
 -0.25 & -1 \\
\end{array}
\right]. \end{equation*}

\end{subequations}
A suitable symmetric positive definite matrix that satisfies \eqref{Lyaineqs} is 
\begin{equation*}
        P = \left[
\begin{array}{cc}
 1. & 0.154297 \\
 0.154297 & 1.57813 \\
\end{array}
\right].
\end{equation*}

\section{SECURITY GAME FORMULATION} \label{Game}

We propose a non-cooperative cybersecurity defense game played between the \textit{Attacker}, the anomaly detector and the \textit{Defender} to guide the controller switching process in an online fashion. By \textit{Attacker} and \textit{Defender} we mean the malicious adversaries and the unit that generates switching signals respectively. Note that, both players' actions and detection reports are represented as edges and the resulting states are represented as nodes in the game tree as in Fig.~\ref{fig:extensiveGame}. The game begins with the Attacker choosing whether to attack the vehicle platoon or not, represented by the leftmost branches in red. If the Attacker chooses to attack then it performs the \textit{message falsification} attack as modelled by \eqref{eq:CACC_withAttack}. 
The notion of ``detector'' is kept very general to make sure our game structure is suitable to various types anomaly detection approaches with the understanding that the design process for all detectors can not perfectly anticipate all complex real-world situations and attack models leading to a certain probability of error.
\begin{definition}[Chance node]
A chance node can be seen as a fictitious player who performs actions according to a probability distribution. 
\end{definition}
We use chance nodes to model the uncertainty about detection results as highlighted in green in Fig.~\ref{fig:extensiveGame}. There should be an ongoing update process regarding these prior beliefs based on detection results in the real-world deployment. 
\begin{definition}[Information set]
An information set of player $i$ is a collection of player $i$'s nodes among which $i$ cannot distinguish. 
\end{definition}
Once the Defender obtains the detection results, it is unclear whether an actual attack has been performed or not. This unique situation is modeled by \textit{information sets} shown as dashed lines in the figure. There are two information sets for the Defender based on the detection reports: one indicating an attack and the other for no attack. This means the Defender must consider the consequences of both an actual attack having occurred, and no attack having occurred, when an attack has been reported by the detector. 

Lastly, after considering a rational Attacker's action and the chances of detection errors, the Defender decides whether to downgrade the CACC controller to the ACC controller or remain with the CACC controller, for example, in the case when an attack is reported with low but non-zero probability. The dual-mode control system reconfiguration scheme leverages the Defender's decisions in the game as the switching signal $\sigma$ of the switched control system. As shown in green in Fig.~\ref{fig:switchedspace}, the solution of the game activates one of the subsystems (i.e. ACC or CACC) which in turn generates new state values for another game before the next switching is required.
 
Formally, we model the game with
\begin{itemize}
    \item Attacker's action space $\mathpzc{A}^A:=\ $\{$a$: engage an attack; $na$: not attacking\}
    \item Chance nodes (anomaly detector) $C:=\ $\{$r$: reporting an attack; $nr$: not reporting an attack\}
    \item Defender's action space $\mathpzc{A}^D:=\ $\{$d$: switch to ACC; $nd$: switch to CACC\}
\end{itemize}
The strategy profile is modeled as $\langle a, c, d \rangle$ for $a\in A$, $c\in C$ and $d\in D$.
The utility values for the Attacker and the Defender are denoted as $[(R_1^A, R_1^D), \dots, (R_8^A, R_8^D)]$, which can be chosen to reflect specific vehicle platoon security trade-offs and risks.

\subsection{Numerical Example} 
\begin{figure}[bp]
    \centering
    \includegraphics[width=\linewidth]{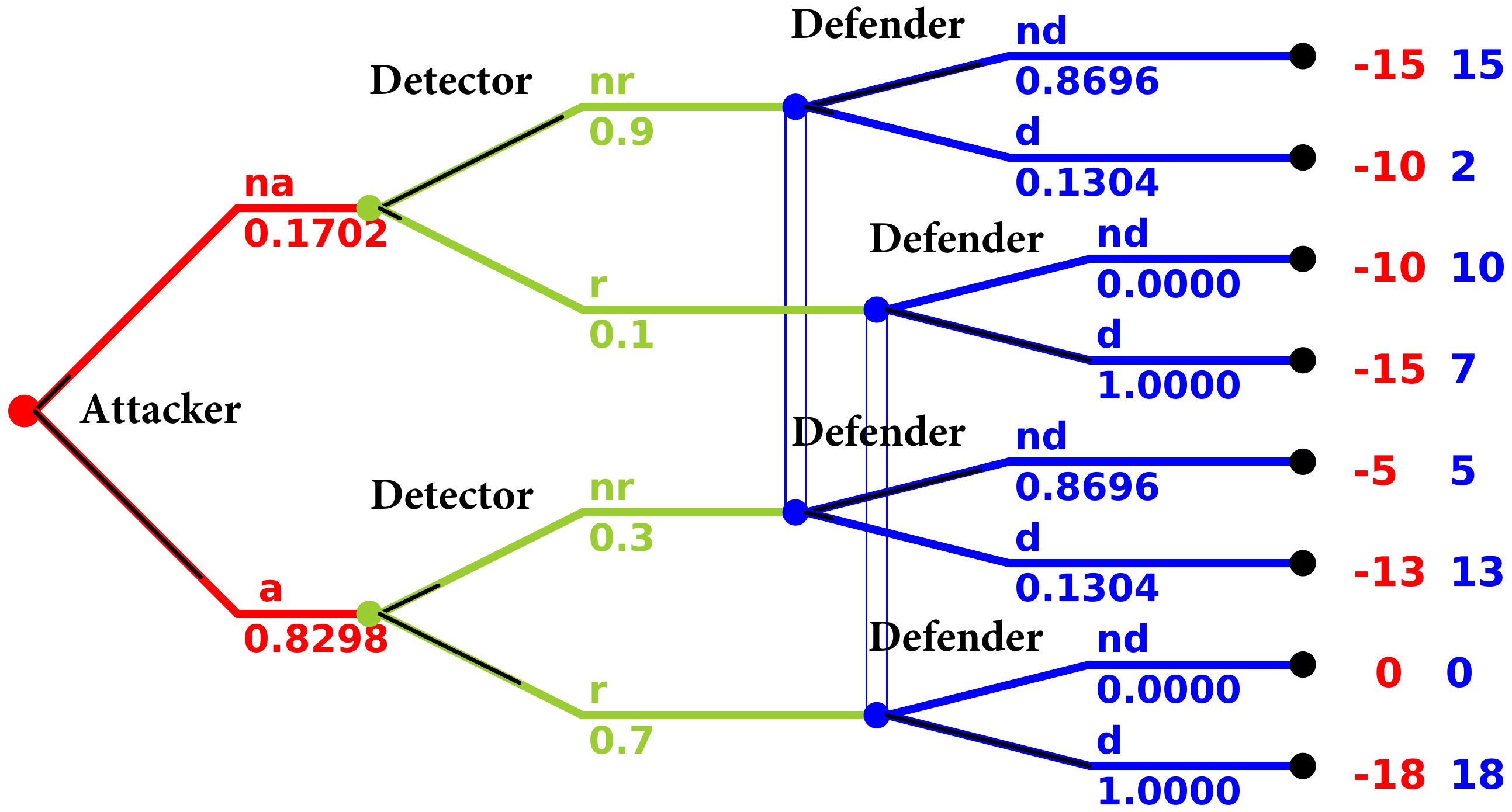}
    \caption{An example game model: the attacker's actions are in red, detection results are in green, and defender's controller switching decisions are in blue.}
    \label{fig:extensiveGame}
\end{figure}
One instance of the game is shown graphically as in Fig.~\ref{fig:extensiveGame}. The utilities of each strategy profile are highlighted in red for the Attacker and blue for the Defender at the leaves of the game tree. The example anomaly detector: $90\%$ of the time correctly reports benign data when there is no attack; it makes false alarms for the remaining $10\%$ of the time. When an attack  has truly occurred, this detector correctly reports it $70\%$  of the time and misses for the remaining $30\%$ of the time. Note that, these values should be calibrated based on the deployed anomaly detector in different problem settings. 
We use a popular open-source game solver \textit{Gambit} (\cite{mckelvey2006gambit}) to find Nash equilibrium solutions. A unique mixed strategy is numerically computed. The probability distribution of each player's actions is shown under the figure's edges.  For example, the Attacker with the above utilities will attack with $82.98\%$ of a chance to attack and if the Detector reports no attack being detected, the Defender would still choose to downgrade to sensor based ACC controller with $13.04\%$ of the chance to react to this high attack intention and imperfect detection results. 
\begin{figure}[t]
     \centering
     \begin{subfigure}[b]{0.45\linewidth}
         \centering
         \includegraphics[width=\linewidth]{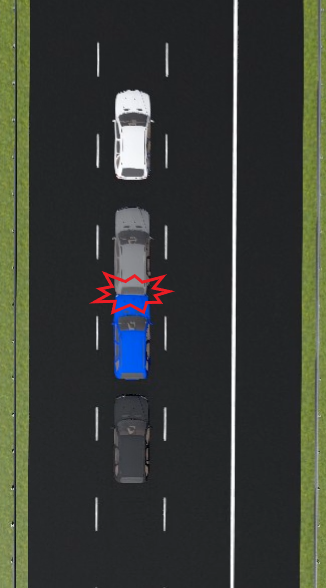}
         \caption{Without attack mitigation}
         \label{fig:noD}
     \end{subfigure}
     \hfill
     \begin{subfigure}[b]{0.45\linewidth}
         \centering
         \includegraphics[width=\linewidth]{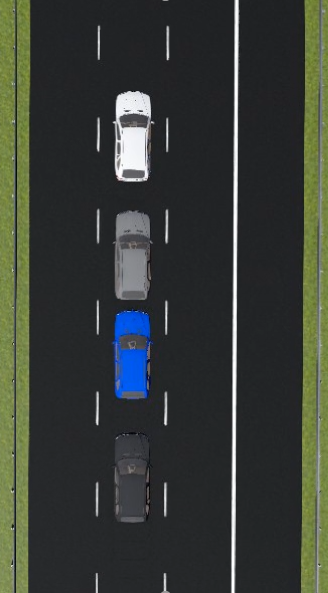}
         \caption{With attack mitigation}
         \label{fig:withD}
     \end{subfigure}
    \hfill
    \caption{Simulation results of a vehicle platoon of size $4$ under message falsification attack, showing a crash without attack mitigation (a) and crash prevention (b).}
    \label{fig:simulation}
\end{figure}

\subsection{Simulation Example}
The effectiveness of the proposed dual-mode control system reconfiguration scheme can be further demonstrated by comparing simulation results as in Fig.~\ref{fig:simulation}. The simulations are conducted in a sophisticated simulator \textit{Webots} \cite{Webots}, which supports realistic simulation of traffic flow, the ambient environmental impact and the vehicle's physical properties (e.g., motor torques, body mass, suspension etc.). In the simulation, the communication channel of the blue \textit{BMW X5} vehicle is compromised. If there is no attack mitigation, malicious message modification would cause the vehicle to accelerate and collide with its predecessor as shown in Fig.~\ref{fig:noD}. With the proposed approach, the attack effects could be significantly reduced. Instead of a collision, the adversary could at most cause the vehicle to reduce the inter-vehicle distance. After reaching the minimum possible inter-vehicle distance as seen in Fig.~\ref{fig:withD}, the scheme is able to guide the vehicle to regain its desired inter-vehicle distance and relative velocity. 

\begin{remark}
The Nash equilibrium solution is highly dependent on the utility values of both players and the detection probabilities of the detector. Therefore, it is critical that these values properly reflect the trade-offs and risks in terms of platoon security. A detailed design of a machine learning based  anomaly detector and a set of utility functions that fulfills these requirements is ongoing research.  
\end{remark}
\section{CONCLUSION}\label{conclusion}

Although V2V communication empowers vehicle platoons with improved operation stability, the resulting high level of connectivity and openness may attract  communications-based cyber-physical attacks. Consequently, we have investigated a controller reconfiguration scheme to mitigate attack effects, and thereby enhanced system safety in an adversarial environment. This paper constitutes a first attempt to use security games to guide the switching process in the context of switched systems, where the interactions between an intelligent attacker and defender in possession of an imperfect detector have been investigated. Two common controllers CACC and ACC for autonomous vehicle platooning have been carefully analysed. A sliding surface based on a prefixed state constraint highlights the inadequacy of common stability definitions
in this context and further guarantees  system safety. A minimum dwell time constraint is derived to ensure string stability of the switched system in a benign environment. While the presented approach
is effective for attack mitigation in short-term operation, we cannot ensure string stability in an adversarial environment, which would possibly require
novel control designs beyond CACC or ACC.
Hence, an interesting open problem motivated by our work is to ascertain whether a switching signal could achieve string stability under communication-based attacks (e.g., to find probabilistic string stability guarantees based on the characteristics of the imperfect anomaly detector). Moreover, additional simulation studies could support usefulness and reliability when the low-level controller and vehicle models are taken into account. 

\section{ACKNOWLEDGMENTS}
We gratefully acknowledge support from the DSTG Next Generation Technology Fund and CSIRO Data61 CRP on `Adversarial Machine Learning for Cyber', and CSIRO Data61 PhD scholarship.

\bibliographystyle{ieeetr}
\bibliography{CDC}

\end{document}